\def\BibTeX{{\rm B\kern-.05em{\sc i\kern-.025em b}\kern-.08em
    T\kern-.1667em\lower.7ex\hbox{E}\kern-.125emX}}
\newtheorem{definition}{Definition}
\newtheorem{property}{Property}
\newtheorem{lemma}{Lemma}
\newtheorem{theorem}{Theorem}
\newtheorem{proof}{IEEEproof}
\def\tool{{CARchecker}\xspace}
\def\forward{FORWARD\xspace}
\def\backward{BACKWARD\xspace}
\def\ar{CAR\xspace}
\begin{document}

\title{Safety Model Checking with Complementary Approximations
\thanks{The full version of the ICCAD 2017 paper.}
}

\author{
    \IEEEauthorblockN{Jianwen Li\IEEEauthorrefmark{1}, Shufang Zhu\IEEEauthorrefmark{2}, Yueling Zhang\IEEEauthorrefmark{2}, Geguang Pu\thanks{Geguang Pu is the corresponding author.}\IEEEauthorrefmark{2} and Moshe Y. Vardi\IEEEauthorrefmark{1}}
    
    \IEEEauthorblockA{\IEEEauthorrefmark{1}Rice University, Houston, TX, USA
   }
   \IEEEauthorblockA{\IEEEauthorrefmark{2}East China Normal University, Shanghai, China
   }
}

\maketitle

\begin{abstract}
Formal-verification techniques, such as model checking, are becoming popular in hardware design. 
SAT-based model checking techniques, such as IC3/PDR, 
have gained a significant success in the hardware industry. 
In this paper, we present a new framework for SAT-based safety model checking, 
named \emph{Complementary Approximate Reachability} (\ar). \ar is based on standard 
reachability analysis, but instead of maintaining a single sequence of reachable-state 
sets, \ar maintains two sequences of over- and under- approximate reachable-state sets,
checking safety and unsafety at the same time. 
To construct the two sequences, \ar uses standard Boolean-reasoning algorithms,
based on satisfiability solving, one to find a satisfying cube of a satisfiable Boolean 
formula, and one to provide a minimal unsatisfiable core of an unsatisfiable Boolean 
formula.  We applied \ar to 548 hardware model-checking instances, and compared its 
performance with IC3/PDR.
Our results show that \ar is able to solve 42 instances that cannot be solved by IC3/PDR. 
When evaluated against a portfolio that includes IC3/PDR and other approaches, \ar is able to 
solve 21 instances that the other approaches cannot solve.  We conclude that \ar should be 
considered as a valuable member of any algorithmic portfolio for safety model checking.
\end{abstract}

\section{Introduction}
Model checking is a fundamental methodology in formal verification and has received 
more and more concern in the hardware design community \cite{BES16,GYS15}. Given a system model $M$ 
and a property $P$, model checking answers the question whether $P$ holds for $M$. 
When $P$ is a linear-time property, this means that we check that all behaviors 
of $M$ satisfy $P$, otherwise a violating behavior is returned as a counterexample. 
In the recent hardware model checking competition (HWMCC) \cite{BK15}, many benchmarks are collected 
from the hardware industry. Those benchmarks are modeled by the \emph{aiger} format \cite{Biere07}, in which 
the hardware circuit and properties (normally the outputs of the circuit) to be verified are both included. 
For safety checking, it answers the question whether the property (output) can be violated by feeding the 
circuit an arbitrary (finite) sequence of inputs. In this paper, we focus on the topic of safety model checking.

Popular hardware model checking techniques include Bounded Model Checking (BMC) \cite{BCCFZ99}, 
Interpolation Model Checking (IMC) \cite{McM03} and IC3/PDR \cite{Bra11,EMB11}. 
BMC reduces the search to a sequence of SAT calls, each of which corresponds to the checking in a certain step.
The satisfiability of one of such SAT calls proves the violation of the model to the given property. 
IMC combines the use of \emph{Craig Interpolation} as an abstraction technique with the use of 
BMC as a search technique.  IC3/PDR starts with an over-approximation, gradually then refined to be more 
and more precise \cite{Bra11,EMB11}. All of the three approaches have proven to be highly scalable, and are 
today parts of the algorithmic portfolio of modern symbolic model checkers, e.g.~ABC~\cite{BM10}.

We present here a new SAT-based model checking framework, named 
\emph{Complementary Approximate Reachability} (\ar), which is motivated both by classical symbolic
reachability analysis and by IC3/PDR as an abstraction-refinement technique.
While standard reachability analysis maintains a single sequence of reachable-state sets,
\ar maintains two sequences of over- and under-approximate reachable-state sets, checking 
safety and unsafety at the same time. While IC3/PDR also checks safety and unsafety at the same time,
\ar does this more directly by keeping an over-approximate sequence for safety checking,
and an under-approximate sequence for unsafety checking.  To compute these sequences, \ar utilizes 
off-the-shelf Boolean-reasoning techniques for computing \emph{Minimal Unsat Core} (MUC) \cite{MSL11},
in order to refine the over-approximate sequence, and \emph{Minimal Satisfying Cube} 
(i.e., partial assignment) \cite{YSTM14}, in order to extend the under-approximate sequences. 
In contrast, IC3/PDR uses a specialized technique, called \emph{generalization}, to compute 
\emph{Minimal Inductive Clauses} (MIC) \cite{Bra11}.  Thus, IC3/PDR computes 
\textit{relatively-inductive} clauses to refine the over- approximate state sequence, while \ar does not.
Because of this difference, \ar and IC3/PDR are complementary, with \ar faster on some problem 
instances where refining by non-relatively-inductive clauses is better, and IC3/PDR faster on others
where refining by relatively-inductive clauses is better. 

To evaluate the performance of \ar, we benchmarked it on 548 problem instances from the 2015 
Hardware Model-Checking Competition, and compared the results with IC3/ PDR. The results show 
that while the performance of \ar does not dominate the performance of IC3/PDR,
\ar complements IC3/PDR and is able to solve 42 instances that IC3/PDR cannot solve. 
When evaluated against a portfolio that includes IC3/PDR, BMC, and IMC, \ar is able to solve
21 instances that the other approaches cannot solve.
It is well known that there is no ``best'' algorithm in model checking; different 
algorithms perform differently on different problem instances \cite{ADKKM05}, and a
state-of-the-art tool must implement a portfolio of different algorithms, cf.~\cite{BM10}. 
Our empirical results also support the conclusion that \ar is an important contribution 
to the algorithmic portfolio of symbolic model checking.

The paper is organized as follows. Section \ref{sec:pre} introduces preliminaries, 
while Section \ref{sec:framework} describes the framework of \ar.  
Section \ref{sec:exp} introduces experimental results, 
and Section \ref{sec:con} discusses and concludes the paper. 

\section{Preliminaries}\label{sec:pre}
\subsection{Boolean Transition System, Safety Verification and Reachability Analysis}
A \emph{Boolean transition system} $Sys$ is a tuple $(V, I, T)$, where $V$ is a 
set of Boolean variables, and every state $s$ 
of the system is in $2^{V}$, the set of truth assignments to $V$.  
$I$ is a Boolean formula representing the set of \emph{initial} states. 
Let $V'$ be the set of primed variables (a new copy) corresponding to the variables of $V$, 
then $T$ is a Boolean formula over $V\cup V'$, denoting the transition relation 
of the system. Formally, for two states $s_1,s_2\in 2^V$, $s_2$ is a successor state of $s_1$, 
denoted as $(s_1,s_2)\in T$, iff $s_1\cup s'_2\models T$, where $s'_2$ is a primed
version of $s_2$.

A path (of length $k$) in $Sys$ is a finite state sequence $s_1,s_2,\ldots, s_k$, where each 
$(s_i,s_{i+1})(1\leq i\leq k-1)$ is in $T$. We use the notation 
$s_1\rightarrow s_2\rightarrow\ldots\rightarrow s_k$ to denote a path from $s_1$ to $s_k$. 
We say that a state $t$ is \emph{reachable from} a state $s$, or that $s$ \emph{reaches} $t$,
if there is a path from $s$ to $t$.  Moreover, we say $t$ is reachable from $s$ \emph{in $i$ steps} 
(resp., \emph{within $i$ steps}) if there is a path from $s$ to $t$ of length $i$ (resp., 
of length at most $i$). 

Let $X\subseteq 2^V$ be a set of states in $Sys$. We define 
$R(X) = \{s'|(s,s')\in T \textit{ where } s\in X\}$, i.e., 
$R(X)$ is the set of successors of states in $X$. Conversely, we define 
$R^{-1}(X) = \{s| (s, s')\in T \textit{ where } s'\in X\}$, i.e., 
$R^{-1}(X)$ is the set of predecessors of states in $X$. Recursively, we define $R^0 (X) = X$ and 
$R^i (X) = R (R^{i-1} (X))$ for $i>0$.  The notations of $R^{-i} (X)$ is defined analogously. 

Given a Boolean transition system $Sys=(V, I, T)$ and a safety property $P$, 
which is a Boolean formula over $V$, the system is called \emph{safe} if $P$ 
holds in all  reachable states of $Sys$, and otherwise it is called \emph{unsafe}. 
The safety checking asks whether $Sys$  is safe.  For unsafe systems, we want 
to find a path from an initial state to some state $s$ that violates $P$, i.e., $s\in \neg P$. 
We call such state reachable to $\neg P$ a \textit{bad} state, and the path from $I$ to $\neg P$ a \emph{counterexample}.  


In \emph{symbolic model checking} (SMC), safety checking is performed via symbolic 
reachability analysis. From the set $I$ of initial states, we compute the set of reachable
states by computing $R^i(I)$ for increasing values of $i$.  We can compute the set of states 
that can reach states in $\neg P$, by computing $(R^{-1})^i(\neg P)$ for increasing values of $i$.  
The first approach is called \emph{forward} search, while the second one is called \emph{backward} search. 
The formal definition of these two approaches are shown in the table below. 

\begin{center}
\scalebox{1}{
\begin{tabular}{|l|c|c|}
	\hline
	 & Forward & Backward\\
	 \hline
	Basic: & $F_0 = I$  &  $B	_0 = \neg P$\\
	Induction: & $F_{i+1} = R(F_i)$  & $B_{i+1} = R^{-1}(B_i)$\\
	Terminate: & $F_{i+1} \subseteq \bigcup_{0\leq j\leq i} F_j$  &  $B_{i+1} \subseteq \bigcup_{0\leq j\leq i} B_j$\\
	Check: & $F_i\cap \neg P \not=\emptyset$  &  $B_i\cap I \not= \emptyset$ \\
	\hline
\end{tabular}
}
\end{center}

For forward search, the state set $F_i$ is the set of states that are reachable 
from $I$ in $i$ steps. This set is computed by iteratively applying $R$.  To find a counterexample, 
forward search checks at every step whether one of the bad states has been reached, 
i.e., whether $F_i\cap \neg P\not=\emptyset$. If a counterexample is not found, the search will 
terminate when $F_{i+1}\subseteq \bigcup_{0\leq j\leq i} F_j$. For backward search, the set $B_i$ 
is the set of states that can reach $\neg P$ in $i$ steps. The workflow of backward search is 
analogous to that of forward search. Note that forward checking of  $Sys=(V,I,T)$ with respect to $P$
is equivalent to backward checking of $Sys^{-1}=(V,\neg P,T^{-1})$ with respect to $\neg I$,
where $T^{-1}$ is simply $T$, with primed and unprimed variables exchanged.

\subsection{Notations}

Each variable $a\in V$ is called an \textit{atom}. A \emph{literal} $l$ is an atom $a$
or a negated atom $\neg a$.  A conjunction of a set of literals, i.e., 
$l_1\wedge l_2\wedge \ldots\wedge l_k$, for $k\geq 1$, is called a \emph{cube}. 
Dually, a disjunction of a set of literals, i.e., $l_1\vee l_2\vee\ldots\vee l_k$, for $k\geq 1$, 
is called a \emph{clause}. Obviously, the negation of a cube is a clause, and vice versa. Let $C$ be a set of cubes (resp., clauses), we define the Boolean formula 
$f(C)=\bigvee_{c\in C} c$ (resp., $f(C)=\bigwedge_{c\in C}c)$. For simplicity, we use $C$ to represent 
$f(C)$ when it appears in a Boolean formula; for example, the formulas $\phi\wedge C$ and $\phi\vee C$,
abbreviate $\phi\wedge f(C)$ and $\phi\vee f(C)$.

A cube (/clause) $c$ can be treated as a set of literals, a Boolean formula, or a set of states, 
depending on the context it is used. If $c$ appears in a Boolean formula, for example, 
$c\Rightarrow \phi$, it is treated as a Boolean formula. If we say a set $c_1$ is a subset of $c_2$, 
then we treat $c_1$ and $c_2$ as literal sets. If we say a state $st$ is in $c$, then we treat $c$
as a set of states.  

We use $s(x)$/$s'(x')$ to denote the current/primed version of the state $s$. Similarly, we use $\phi(x)$/$\phi'(x')$ to denote the current/primed version of a Boolean formula $\phi$. For the transition formula $T$, we use the notation $T(x,x')$ to highlight that it contains both current and primed variables. Consider a Boolean formula $\phi$ whose alphabet is $V\cup V'$ and is in the conjunctive normal form (CNF). If $\phi$ is satisfiable, there is a full assignment $A\in 2^{V\cup V'}$ such that $A\models \phi$. Moreover, there is a \textit{partial assignment} $A^p\subseteq A$ such that for every full assignment $A'\supseteq A^p$ it holds that $A'\models \phi$. In the following, we use the notation $pa(\phi)$ to represent a partial assignment of $\phi$, and use $pa (\phi)|_x$ to represent the subset of $pa (\phi)$ achieved by projecting variables only to $V$. On the other hand, if $\phi$ is unsatisfiable, there is a \textit{Minimal Unsat Core} (MUC) $C\subseteq \phi$ (here $\phi$ is treated as a set of clauses) such that $C$ is unsatisfiable and every $C'\subset C$ is satisfiable. In the following, we use the notation $muc (\phi)$ to represent such a MUC of $\phi$, and use $muc(\phi)|_{c'}$ to represent the subset of $muc(\phi)$ achieved by projecting clauses only to $c'$. Since $c'$ is a cube, $muc(\phi)|_{c'}$ is also a cube.

\section{The Framework of \ar}\label{sec:framework}
We present here a variant of standard reachability checking, in which the set of maintained states is 
allowed to be approximate. The new approach is named \textit{Complementary Approximate Reachability}, 
abbreviated as \ar. As in standard reachability analysis, \ar also enables both forward and backward search. 
In the following, we introduce the forward approach in detail; the 
backward approach can be derived symmetrically.

\subsection{Approximate State Sequences}

In standard forward search, described in Section~\ref{sec:pre}, each $F_i$ is 
a set of states that are reachable from $I$ in $i$ steps. To compute elements in $F_{i+1}$, 
previous SAT-based symbolic-model-checking approaches consider the formula 
$\phi = F_i (x) \wedge T (x,x')$,  and use partial-assignment techniques to obtain all states in $F_{i+1}$ from $\phi$ 
(by projecting to the prime part of the assignments). Since the set of reachable states is computed accurately, maintaining a 
sequence of sets of reachable states from $I$ enables to check both safety and unsafety.
However in Forward \ar, two sequences of sets of reachable states are necessary to maintain: 1). ($F_0,F_1,\ldots$) is a sequence of over-approximate state sets, which are supersets of reachable states from $I$. 2) ($B_0,B_1,\ldots$) is a sequence of under-approximate state sets, which are subsets of reachable states to $\neg P$. Under the approximation, the first sequence is only sufficient to check safety, and the second one is then required to check unsafety.   
The two state sequences are formally defined as follows. 

\begin{definition}\label{def:approximate-states}
For a Boolean system $Sys$ and the safety property $P$, the over-approximate  
state sequences $(F_0,F_1,\ldots, F_i)$ $(i\geq 0)$, which is abbreviated as $F$-sequence, and the under-approximate state sequence $(B_0,B_1,\ldots, B_k)(k\geq 0)$, which is abbreviated as $B$-sequence, are finite sequences of state sets such that:

\begin{center}
\scalebox{0.9}{
\begin{tabular}{|l|l|r|}
\hline
Basic & $F_0 = I$ &  $B_0 = \neg P$\\
Constraint & $F_j\subseteq P(0\leq j)$ &  --\\
Inductive & $F_{j+1} \supseteq R(F_{j}) (j\geq 0)$ & $B_{j+1} \subseteq R^{-1}(B_{j})(j\geq 0)$\\
\hline
\end{tabular}
}
\end{center}

For each $F_i (i\geq 0)$, we call it a frame. We also define the notation $S(F)=\bigcup_{0\leq j\leq i} F_j$ is the set of states in the $F$-sequence, and $S(B)=\bigcup_{0\leq j\leq k}B_j$ is the set of states in the $B$-sequence.
\end{definition}

Note that the $F$- and $B$-sequence are not required to have the same length. Intuitively, each $F_{i+1}$ is an over-approximate set of states that are reachable from $F_{i}$ in one step, and $B_{i+1}$ 
is an under-approximate set of states that are reachable to $B_{i}$ in one step. 
As we mentioned in Section~\ref{sec:pre}, we overload notation and consider $F_i$ to represent 
(1) a set of states, (2) a set of clauses and (3) a Boolean formula in CNF. Analogously, we consider $B_i$ to be 
(1) a set of states, (2) a set of cubes and (3) a Boolean formula in DNF.

The following theorem shows that, the safety checking is preserved even if $F_i(i\geq 0)$ becomes 
over-approximate. 

\begin{theorem}[Safety Checking]\label{thm:unsat}
	A system $Sys$ is safe for $P$ iff there is $i\geq 0$ and an $F$-sequence $(F_0,F_1,\ldots,F_i, F_{i+1})$ such that $F_{i+1}\subseteq \bigcup_{0\leq j\leq i}F_j$. 
\end{theorem}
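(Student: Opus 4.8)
The plan is to prove the two directions of the biconditional separately, treating the ``if'' direction as a soundness argument and the ``only if'' direction as a completeness/termination argument.

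For the ``if'' direction, suppose an $F$-sequence with $F_{i+1}\subseteq \bigcup_{0\leq j\leq i}F_j$ exists; I want to show $Sys$ is safe. The key step is to show that $U=\bigcup_{0\leq j\leq i}F_j$ is an \emph{inductive invariant}: it contains $I$ and is closed under the successor operator $R$. Containment is immediate from the Basic row $F_0=I$. For closure, I would compute $R(U)=\bigcup_{0\leq j\leq i}R(F_j)$; for each $j<i$ the Inductive row gives $R(F_j)\subseteq F_{j+1}\subseteq U$, and for the boundary case $j=i$ the Inductive row together with the termination condition gives $R(F_i)\subseteq F_{i+1}\subseteq U$. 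Hence $R(U)\subseteq U$. Since every $F_j\subseteq P$ by the Constraint row, we have $U\subseteq P$; a routine induction on path length then shows every state reachable from $I$ lies in $U$, and therefore in $P$, so $Sys$ is safe.

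For the ``only if'' direction, suppose $Sys$ is safe. I would exhibit the \emph{exact} reachable sequence $F_j=R^j(I)$ as a witnessing $F$-sequence. It satisfies $F_0=I$; the Inductive row holds with equality since $F_{j+1}=R^{j+1}(I)=R(F_j)$; and $F_j\subseteq P$ because safety guarantees that all states reachable from $I$ satisfy $P$, so the Constraint row holds. It remains to produce an index $i$ witnessing termination. Since $V$ is finite, $2^V$ is finite, so the non-decreasing chain of cumulative unions $\bigcup_{0\leq j\leq i}F_j$ must stabilize at some least $i$; at that point $F_{i+1}=R^{i+1}(I)\subseteq\bigcup_{0\leq j\leq i}F_j$, which is exactly the termination condition.

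I expect the main obstacle to be the closure step $R(U)\subseteq U$ in the ``if'' direction: this is precisely where over-approximation could a priori break soundness, and the argument hinges on the observation that the termination condition $F_{i+1}\subseteq U$ is exactly what is needed to absorb the boundary frame $j=i$ that the Inductive row alone does not reach. The finiteness argument in the other direction is standard; the only care needed there is to verify that the exact sequence genuinely meets all three rows of Definition~\ref{def:approximate-states}.
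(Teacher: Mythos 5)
Your proof is correct. The ``if'' direction is essentially the paper's argument: you both show that $U=\bigcup_{0\leq j\leq i}F_j$ is an inductive invariant contained in $P$, with the termination condition $F_{i+1}\subseteq U$ supplying exactly the boundary case $R(F_i)\subseteq U$ that the Inductive row alone does not cover. The two proofs diverge only in the ``only if'' direction, where you witness the termination condition with the exact reachability sequence $F_j=R^j(I)$ and invoke finiteness of $2^V$ to argue that the non-decreasing chain of cumulative unions must stabilize. The paper instead short-circuits this by taking the three-frame sequence $(I,S,S)$, where $S$ is the full set of states reachable from $I$: one checks directly that it satisfies all rows of Definition~\ref{def:approximate-states} and that $F_2\subseteq F_0\cup F_1$ holds trivially. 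Your construction is closer to what the algorithm actually computes and makes the role of finiteness explicit (it is the same stabilization argument that underlies termination of standard forward reachability), while the paper's is shorter and needs no stabilization step, only the observation that the reachable set is closed under $R$. Both are sound; just make sure, as you note, to verify that $R^j(I)\subseteq P$ for every $j$, which is exactly where the safety hypothesis is used.
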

\begin{proof}
	($\Leftarrow$) 
	Let $S = \bigcup_{0\leq j\leq i} F_j$. According to Definition \ref{def:approximate-states}, if $F_{i+1}\subseteq S$ is true, $R(S) = R(\bigvee_{0\leq j\leq i-1}F_j)\cup R(F_{i}) \subseteq ((S\cup F_{i+1}) = S)$. So 
	$S$ contains all reachable states from $I$. Also we know $P\supseteq F_i (0\leq i\leq k)$, so $P\supseteq S$ holds. 
	That means $S\cap \neg P$ is empty, and thus all reachable states from $I$, which are included in $S$, are not in 
	$\neg P$. So the system $Sys$ is safe for $P$. 

	($\Rightarrow$) 
	Assume the system $Sys$ is safe for $P$, then all reachable states from $I$ are in $P$. Let $S\subseteq P$ be the set of reachable states from $I$. Now let $F_0 = I, F_1 = S$ 
	and $F_2 = S$, and according to Definition \ref{def:approximate-states} we know that $\delta=(F_0, F_1, F_2)$ is an 
	$F$-sequence satisfying $F_2 \subseteq F_0\cup F_1$.
\end{proof}

Theorem \ref{thm:unsat} is insufficient  for unsafety checking, as $F_{i+1}\subseteq \bigcup_{0\leq j\leq i}F_j$ has to prove 
false for every $i\geq 0$. On the other hand, the unsafety checking condition $\exists i\cdot F_i\cap \neg P\not = \emptyset$ in the standard forward reachability is not correct when $F_i$ becomes over-approximate. Our solution is to 
benefit from the information stored in the $B$-sequence.

\begin{theorem}[Unsafety Checking]\label{thm:sat}
	For a system $Sys$ and the safety property $P$, $Sys$ is unsafe for $P$ iff there is $i\geq 0$ and a $B$-sequence $(B_0,B_1,\ldots,B_i)$ such that $I\cap B_i\not = \emptyset$.
\end{theorem}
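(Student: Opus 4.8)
The statement is an equivalence, and its existential phrasing ``there is a $B$-sequence'' is the key to both directions: to establish unsafety it suffices to exhibit one sequence, while to exploit a given sequence we need only its defining inductive constraint $B_{j+1}\subseteq R^{-1}(B_j)$. The plan is therefore to read each direction as a translation between a concrete counterexample path and a chain of membership facts inside the sequence, with the endpoints pinned down by the basic constraint $B_0=\neg P$ and by the check condition $I\cap B_i\neq\emptyset$.

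For the sufficiency direction ($\Leftarrow$), I would assume $I\cap B_i\neq\emptyset$ and pick a state $s_0\in I\cap B_i$. The idea is to descend the sequence one frame at a time, using the inductive constraint to produce successor states. Since $B_i\subseteq R^{-1}(B_{i-1})$ and $s_0\in B_i$, there is a successor $s_1$ of $s_0$ with $s_1\in B_{i-1}$; iterating, I obtain states $s_0,s_1,\ldots,s_i$ with $(s_{j-1},s_j)\in T$ and $s_j\in B_{i-j}$ for each $j$. The terminal membership gives $s_i\in B_0=\neg P$, while $s_0\in I$ by choice, so $s_0\to s_1\to\cdots\to s_i$ is a counterexample and $Sys$ is unsafe.

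For the necessity direction ($\Rightarrow$), I would start from a counterexample path $s_0\to s_1\to\cdots\to s_i$ with $s_0\in I$ and $s_i\in\neg P$, guaranteed by unsafety, and build a witnessing $B$-sequence explicitly. Setting $B_0=\neg P$ (as required by the basic constraint) and $B_j=\{s_{i-j}\}$ for $1\leq j\leq i$, I would check that each $B_{j+1}\subseteq R^{-1}(B_j)$: this holds because $s_{i-j-1}$ is a predecessor of $s_{i-j}$, and for $j=0$ because $s_{i-1}$ is a predecessor of $s_i\in\neg P$. The resulting sequence satisfies Definition~\ref{def:approximate-states}, and $B_i=\{s_0\}$ with $s_0\in I$ yields $I\cap B_i\neq\emptyset$.

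The arguments are short, and the main care is bookkeeping rather than depth: one must keep the reindexing $B_{i-j}\leftrightarrow s_j$ straight so that the two boundary conditions align ($s_0\in I$ at the bottom of the sequence, $s_i\in\neg P$ at the top), and must respect the direction of $R^{-1}$ as the predecessor map rather than the successor map. A secondary point worth noting is that the existential quantifier is essential: an under-approximate sequence may collapse to empty frames (once some $B_j=\emptyset$, all later frames are forced empty and the check fails), so the theorem only claims that \emph{some} valid sequence detects unsafety, exactly the one extracted from a counterexample above.
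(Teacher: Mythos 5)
Your proof is correct and follows essentially the same route as the paper's: both directions translate between a counterexample path and the chain of memberships $s_j\in B_{i-j}$, using the inductive constraint $B_{j+1}\subseteq R^{-1}(B_j)$ to descend in one direction and the singleton construction $B_j=\{s_{i-j}\}$ in the other. If anything, your necessity direction is slightly more careful than the paper's, since you keep $B_0=\neg P$ as the basic constraint requires and verify $B_1\subseteq R^{-1}(B_0)$ separately, whereas the paper sets $B_0$ to a singleton.
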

\begin{proof}
($\Leftarrow$) If $I\cap B_i\not = \emptyset$ and according to Definition \ref{def:approximate-states}, we can find a path $\rho = s_0\rightarrow s_1\rightarrow\ldots s_i$ in $Sys$ such that $s_0\in I\cap B_i$ and $s_j\in B_{i-j}$ for $1\leq j\leq i$. Hence we have that $s_i\in B_0=\neg P$, which means $\rho$ is a counterexample. So $Sys$ is unsafe for $P$.

($\Rightarrow$) If $Sys$ is unsafe for $P$, there is a path $\rho$ from $I$ to $\neg P$. Let the length of $\rho$ 
be $n+1$, and state $j (0\leq j\leq n)$ on the path is labeled as $\rho[j]$. Now we construct the $B$-sequence ($B_0,B_1,\ldots,B_i$) in the following way: Let $i=n$ and $B_j = \{\rho[i-j]\}$ for $0\leq j\leq i$. So $B_i=\{\rho[0]\}$ satisfying $B_i\cap I\not =\emptyset$ (because $\rho[0]\in I$). 
\end{proof}

Besides, since \ar maintains two different sequences, exploring the relationship between them can help to establish the framework. 
The following property shows that, the states stored in $F$- and $B$- sequences are unreachable when the system $Sys$ is safe for the property $P$.  

\begin{property}\label{coro:relation}
	For a system $Sys$ and the safety property $P$, $Sys$ is safe for $P$ iff there is an $F$-sequence such that $S(F)\cap R^{-1}(S(B))=\emptyset$ for every $B$-sequence.
\end{property}
\begin{proof}
	($\Rightarrow$) Theorem \ref{thm:unsat} shows if $Sys$ is safe for $P$ there is an $F$-sequence and $n\geq 0$ such that $F_{n+1}\subseteq \bigcup_{0\leq j\leq n}F_j$. Let $S = \bigcup_{0\leq j\leq n}F_j$. We have proven that $R(S)\subseteq S$, i.e. $S$ is the upper bound of $S(F)$. So $S(F)=\bigcup_{0\leq j\leq i}F_j= S$ for all $i\geq n$. On the other hand, since we consider arbitrary $B$-sequence, we set $S(B)$ to its upper bound: the set of all reachable states to $\neg P$. In this situation, $R^{-1}(S(B))=S(B)$. Now if $S(F)\cap R^{-1}(S(B))\not = \emptyset$, assume that $s\in S(F)\cap R^{-1}(S(B))$. So $s$ is reachable to $\neg P$. Assume $s$ is reachable to $t\in \neg P$ and from the definition of $S(F)$ we know $t\in S(F)$ too. However, this is a contradiction, because $S(F)\cap \neg P$ is empty based on the constraint $S(F)\subseteq P$. So $S(F)\cap R^{-1}(S(B))=\emptyset$ is true.
	
	($\Leftarrow$) Since both $S(F)$ and $S(B)$ have an upper bound, we can set them to their upper bounds. That is, set $S(F)$ to contain all reachable states from $I$, and $S(B)$ to contain all reachable states to $\neg P$. Because $R^{-1}(S(B))$ has the same upper bound with $S(B)$, $S(F)\cap R^{-1}(S(B))=\emptyset$ indicates that $I$ is not reachable to $\neg P$. So $Sys$ is safe for $P$.
\end{proof}

Property \ref{coro:relation} suggests a direction that how we can refine the $F$-sequence and update the $B$-sequence. That is to try to make the states in these two sequences unreachable. More details are shown in the next section.

We have established the Forward \ar framework, and presented the theoretical guarantee for both safety and unsafety checking. 
Note that symmetrically, 
Backward \ar performs the same framework on $Sys^{-1}= (V, \neg P, T^{-1})$ with respect to $\neg I$, 
where $T^{-1}$ is simply $T$ with primed and unprimed variables exchanged. 

\subsection{The Framework}

Unlike the standard forward reachability, which computes 
all states in $F_{i+1}$ from the single formula $F_i(x)\wedge T(x,x')$, 
Forward \ar computes elements of $F_{i+1}$ from different SAT calls with different inputs. 
Each SAT call gets as input a formula of the form $F_i(x)\wedge T(x,x') \wedge c'(x')$,
where the cube $c$ is in some $B_j$ and $c'$ is its primed version. If the formula is satisfiable, we are able to find a new state which is in $B_{j+1}$; otherwise we prove that $c\cap R(F_{i})=\emptyset$, which indicates $F_{i+1}$ can be refined by adding the clause $\neg c$. 
The following lemma shows the main idea of computing 
new reachable states to $\neg P$ and new clauses to refine $F_i$.
 
\begin{lemma}\label{lem:singlecompute}
Let ($F_0,F_1,\ldots$) be an $F$-sequence, ($B_0,B_1,\ldots$) be a $B$-sequence, cube $c_1\in B_j$($j\geq 0$) and the formula $\phi$ be $F_i(x)\wedge T(x,x')\wedge {c_1}'(x') (0\leq i)$:
\begin{enumerate}

\item \label{lem:singlecompute:item:1} If $\phi$ is satisfiable, there is a cube $c_2$ such that every state $t\in c_2$ is a predecessor of some state $s$ in $c_1$
and $t\in F_i$. By updating $B_{j+1} = B_{j+1}\cup \{c_2\}$, the sequence is still a $B$-sequence.

\item \label{lem:singlecompute:item:2} If $\phi$ is unsatisfiable, $c_1\cap R(F_i)=\emptyset$. Moreover, there is a cube $c_2$ such that $c_1\Rightarrow c_2$ and $c_2\cap R(F_{i})=\emptyset$. By updating $F_{i+1} = F_{i+1}\cup \{\neg c_2\}$, the sequence is still an $F$-sequence.

\end{enumerate}
\end{lemma}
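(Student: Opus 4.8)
The plan is to handle the two cases separately, in each bridging the syntactic object returned by the Boolean-reasoning routine---a partial assignment in the satisfiable case, a minimal unsat core in the unsatisfiable case---to the semantic reachability condition demanded by Definition~\ref{def:approximate-states}.

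For item~\ref{lem:singlecompute:item:1}, I would take a partial assignment $pa(\phi)$ witnessing satisfiability of $\phi = F_i(x)\wedge T(x,x')\wedge c_1'(x')$ and set $c_2 = pa(\phi)|_x$, the projection onto the unprimed variables $V$. Since every full assignment extending $pa(\phi)$ satisfies $\phi$, each state $t\in c_2$ satisfies $F_i$, so $t\in F_i$; moreover the conjuncts $T(x,x')\wedge c_1'(x')$ guarantee that $t$ has a successor $s$ with $s\models c_1$, so $t$ is a predecessor of a state in $c_1$. Because $c_1\in B_j$ gives $c_1\subseteq B_j$ as state sets, this yields $c_2\subseteq R^{-1}(c_1)\subseteq R^{-1}(B_j)$. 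The old $B_{j+1}$ already satisfies $B_{j+1}\subseteq R^{-1}(B_j)$, so the union $B_{j+1}\cup\{c_2\}$ still does, and the basic condition $B_0=\neg P$ is untouched; hence the updated sequence remains a $B$-sequence.

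For item~\ref{lem:singlecompute:item:2}, I would first establish $c_1\cap R(F_i)=\emptyset$ by contraposition: if some $s\in c_1$ had a predecessor $t\in F_i$, then the assignment $t\cup s'$ would satisfy $F_i(x)\wedge T(x,x')\wedge c_1'(x')=\phi$, contradicting unsatisfiability. To obtain the generalization I would compute $muc(\phi)$ and set $c_2' = muc(\phi)|_{c_1'}$, which is a cube consisting of a subset of the unit clauses of $c_1'$; let $c_2$ be its unprimed version. Since $c_2'\subseteq c_1'$ as literal sets we get $c_1\Rightarrow c_2$, and since the minimal unsat core is itself unsatisfiable, $F_i(x)\wedge T(x,x')\wedge c_2'(x')$ is unsatisfiable, so the same contraposition argument gives $c_2\cap R(F_i)=\emptyset$.

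Finally I would verify that adding the clause $\neg c_2$ to $F_{i+1}$ preserves the $F$-sequence. As a state set the new frame is $F_{i+1}\cap\neg c_2$; the constraint $F_{i+1}\subseteq P$ is preserved because intersecting only shrinks the set (and $i+1\geq 1$, so $F_0=I$ is untouched), while the inductive condition requires $R(F_i)\subseteq F_{i+1}\cap\neg c_2$, which follows from the old inclusion $R(F_i)\subseteq F_{i+1}$ together with $R(F_i)\cap c_2=\emptyset$, i.e.\ exactly the reachability fact just proved. The main obstacle is item~\ref{lem:singlecompute:item:2}: one must check that the core projection indeed returns a genuine sub-cube of $c_1'$ for which unsatisfiability is retained, and that translating the syntactic act of adding $\neg c_2$ as a clause into the semantic inclusion $R(F_i)\subseteq\neg c_2$ correctly preserves both defining conditions of the $F$-sequence.
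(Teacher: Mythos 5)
Your proposal is correct and follows essentially the same route as the paper's own proof: project a partial assignment onto the unprimed variables for item~\ref{lem:singlecompute:item:1}, project the minimal unsat core onto the clauses of $c_1'$ for item~\ref{lem:singlecompute:item:2}, derive the reachability facts from (un)satisfiability of $\phi$, and then check the defining conditions of Definition~\ref{def:approximate-states}. If anything you are slightly more explicit than the paper, e.g.\ in noting that a superset of an unsatisfiable clause set remains unsatisfiable and that the constraint $F_{i+1}\subseteq P$ is preserved under intersection.
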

\begin{proof}
	\begin{enumerate}
		\item Since $\phi$ is satisfiable, there exists a partial assignment $A^p$, which is a set of literals, of $\phi$. Now by projecting $A^p$ to the part only contains current variables, i.e. $A^p|_x$, we set $c_2=A^p|_x$. Let $t\in c_2$ and $s\in c_1$ are two states. From the definition of partial assignment, we know that $t(x)\cup s'(x')$ is a full assignment of $\phi$. So $t$ is a predecessor of $s$. Moreover, since $F_i$ consists of only current variables, $t(x)\cup s'(x')\models F_i(x)$ implies $t\models F_i$. So $t\in F_i$ is true. Before updating $B_{j+1}$, we know that $B_{j+1}\subseteq R^{-1}(B_j)$. And since $c_2\subseteq R^{-1}(c_1)$ and $c_1\subseteq B_j$, so $c_2\subseteq R^{-1}(B_j)$. Hence, $B_{j+1}\cup\{c_2\} \subseteq R^{-1}(B_j)$. That is, $B_{j+1}$ is under-approximate. Finally, for every other $B_k (1\leq (k\not =j+1))$, $B_k\subseteq R^{-1}(B_{k-1})$ is true. From Definition \ref{def:approximate-states}, the sequence is still a $B$-sequence.
		\item From the definition $R(F_i)=\{s | (t, s)\in T\textit{ and }t\in F_i\}$, we know for every state $s\in R(F_i)$ there is a state $t\in F_i$ such that $t(x)\cup s'(x')\models F_i(x)\wedge T(x,x')$. If there is a state $s\in c_1$ such that $s\in R(F_i)$, we know that there is $t\in F_i$ such that $t(x)\cup s'(x')$ is an assignment of $\phi$, making $\phi$ satisfiable. But this is a contradiction. So $c_1\cap R(F_i)=\emptyset$. Moreover, since $\phi$ is unsatisfiable, there is a cube $c_2$ such that $c_1\Rightarrow c_2$ and $F_i(x)\wedge T(x,x')\wedge {c_2}'(x')$ is still unsatisfiable. So $c_2\cap R(F_i)=\emptyset$ is also true, which implies that $\neg c_2\supseteq R(F_i)$, i.e. states represented by $\neg c_2$ includes all those in $R(F_i)$. So $F_{i+1}\cup \{\neg c_2\}\supseteq R(F_i)$ is true, which means $F_{i+1}$ is still over-approximate. For every other $F_k (1\leq (k\not = i+1))$, they remains over-approximate. As a result, we prove that the sequence is still an $F$-sequence.
	\end{enumerate}
\end{proof}

In the lemma above, Item \ref{lem:singlecompute:item:1} suggests to add a set of states rather than a single one to the $B$-sequence, and similarly Item \ref{lem:singlecompute:item:2} suggests to refine the $F$-sequence by blocking a set of states rather than a single one. In both situations, it will speed up the computation. These two kinds of heuristics can be achieved by partial-assignment and MUC techniques. That is, we can set $c_2 = pa (\phi)|_x$ in Item \ref{lem:singlecompute:item:1}, and $c_2= muc (\phi)|_{{c_1}'}$ in Item \ref{lem:singlecompute:item:2}. Now, we provide a general framework of \ar, which is shown in Table \ref{tab:framework}. 

\begin{table*}
\caption{The Framework of Forward \ar}\label{tab:framework}
\centering
\scalebox{1.1}{
\begin{tabular}{|p{15cm}|}
\hline
\begin{enumerate}
\item Initially, set $B_0=\neg P, F_0 = I$;
\item\label{framework:unsafeone} If $F_0\cap B_0 \not = \emptyset$ or $R(F_0)\cap B_0\not =\emptyset$, return unsafe with counterexample;
\item For $i\geq 1$,
\begin{enumerate}
	\item\label{framework:extend_Fi} Set $F_i := P$; 
	\item\label{framework:while} while $S(F)\cap R^{-1}(S(B))\not =\emptyset$
	\begin{enumerate}
		\item\label{framework:minimal_j} Let $j$ be the minimal index such that $F_j\cap R^{-1}(B_k)\not =\emptyset$ for some $k\geq 0$;
		\item\label{framework:unsafe} If $j=0$, return unsafe with counterexample;
		\item\label{framework:some_c} Let cube $c_1 = pa(F_j(x)\wedge T(x,x')\wedge {B_k}'(x'))|_x$ (From \ref{framework:minimal_j} $c_1$ must exist);
		\item\label{framework:add_c} Set $B_{k+1} := B_{k+1}\cup \{c_1\}$ if $B_{k+1}$ exists, otherwise set $B_{k+1}:= \{c_1\}$;
		\item Let $\phi = F_{j-1}(x)\wedge T(x,x')\wedge {c_1}'(x')$;
		\item\label{framework:sat} If $\phi$ is satisfiable, let $c_2=pa (\phi)|_x$ then assert $c_2\not\subseteq R^{-1}(S(B))$ and set $B_{k+2} := B_{k+2}\cup \{c_2\}$ if $B_{k+2}$ exists, otherwise set $B_{k+2}:= \{c_2\}$;
		\item\label{framework:unsat} If $\phi$ is unsatisfiable, let $c_2 = muc (\phi)|_{{c_1}'}$ then assert $\neg c_2 \not\supseteq F_j$ and set $F_{j} := F_{j}\cup \{\neg c_2\}$.
	\end{enumerate}
	\item\label{framework:safe} If $\exists 0\leq j\leq i\cdot F_{j}\subseteq \bigcup_{0\leq m\leq j-1} F_m$, return safe;
	\item Set $i = i+1$;
\end{enumerate}
\end{enumerate}\\
\hline
\end{tabular}
}
\end{table*}

The motivation of the computation are simply twofold: 1) Enlarge the lengths of the $F$- and $B$-sequences step by step (controlled by $i$ in the framework); 2) For each $i$, update both sequences until either the unsafety is detected (Step \ref{framework:unsafe}) or $S(F)\cap R^{-1}(B(F))=\emptyset$.  From Property \ref{coro:relation}, $S(F)\cap R^{-1}(S(B))$ is a necessary condition to prove safety (in Step \ref{framework:safe}). In Step \ref{framework:minimal_j}, we choose the minimal index because \ar aims to find a counterexample, if exists, as soon as possible. The $F$- and $B$-sequence are not extended synchronously: In each $i$, the $F$-sequence is extended only once (in Step \ref{framework:extend_Fi}), while the $B$-sequence is extended more than once (in Step \ref{framework:add_c} and \ref{framework:sat}). In Step \ref{framework:safe}, the constraint $F_{j}\subseteq \bigcup_{0\leq m\leq j-1} F_m$ can be checked by SAT solvers with the input formula ($F_j\wedge\bigwedge_{0\leq m\leq j-1}\neg F_m$). The constraint holds iff the formula is unsatisfiable. $S(F)$ and $S(B)$ are updated by default when the $F$- and $B$-sequence are updated. 
The correctness and termination of the framework are  guaranteed by the following theorem.

\begin{theorem}\label{thm:terminate}
Given a system $Sys$ and a safety property $P$, the framework terminates with a correct result.
\end{theorem}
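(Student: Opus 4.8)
The plan is to prove Theorem~\ref{thm:terminate} by separately establishing \emph{partial correctness} (whenever the framework returns an answer, it is the right one) and \emph{termination} (it always returns). Partial correctness decomposes into the three exit points of the framework. When the framework returns \emph{unsafe} in Step~\ref{framework:unsafeone} or Step~\ref{framework:unsafe}, I would show that an actual counterexample path from $I$ to $\neg P$ has been witnessed. In Step~\ref{framework:unsafeone} this is immediate from the definitions ($F_0 = I$, $B_0 = \neg P$). In Step~\ref{framework:unsafe}, $j=0$ means $F_0 \cap R^{-1}(B_k) \neq \emptyset$; since $F_0 = I$ and every $B_k$ consists of states that reach $\neg P$ within $k$ steps (by the inductive clause $B_{j+1} \subseteq R^{-1}(B_j)$ of Definition~\ref{def:approximate-states}), concatenating gives a genuine path $I \to \neg P$, so by Theorem~\ref{thm:sat} the system is unsafe. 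When the framework returns \emph{safe} in Step~\ref{framework:safe}, the condition $F_j \subseteq \bigcup_{0 \le m \le j-1} F_m$ together with Theorem~\ref{thm:unsat} directly yields safety. The only subtlety here is to confirm that the objects maintained by the loop really remain a valid $F$-sequence and $B$-sequence throughout: this is exactly what Lemma~\ref{lem:singlecompute} guarantees for the updates in Steps~\ref{framework:sat} and \ref{framework:unsat}, while Step~\ref{framework:some_c}/\ref{framework:add_c} adds a cube $c_1$ obtained as a partial assignment, whose soundness as a $B$-sequence extension follows from Item~\ref{lem:singlecompute:item:1} of the same lemma applied at frame $F_j$.

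For termination I would argue that both the outer loop (on $i$) and the inner \textbf{while} loop (Step~\ref{framework:while}) terminate. The inner loop is the crux. Each iteration either detects unsafety and exits, or executes Step~\ref{framework:sat}/\ref{framework:unsat}. The two \emph{assert} clauses are the key invariants: Step~\ref{framework:sat} asserts $c_2 \not\subseteq R^{-1}(S(B))$, so the newly added cube $c_2$ contributes at least one genuinely new state to $S(B)$; Step~\ref{framework:unsat} asserts $\neg c_2 \not\supseteq F_j$, so the clause $\neg c_2$ added to $F_j$ strictly shrinks the state set denoted by $F_j$. Since the state space $2^V$ is finite, $S(B)$ can grow only finitely often and each $F_j$ can shrink only finitely often; hence the inner loop cannot run forever and must exit with $S(F) \cap R^{-1}(S(B)) = \emptyset$ (or with an unsafe verdict). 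I would make this rigorous by exhibiting a well-founded measure, e.g. the pair consisting of the number of states \emph{not} yet in $S(B)$ and the total number of states in the frames of the $F$-sequence, and showing it strictly decreases (lexicographically, or along the appropriate component) on every inner iteration.

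For the outer loop I would appeal to the monotone structure of the $F$-sequence. Because the frames satisfy $F_{j+1} \supseteq R(F_j)$ and are only ever refined (shrunk) by blocking clauses, and because $F_0 = I$ is fixed, the sequence of frame contents is bounded and the approximations converge; in a finite-state system the reachable set is eventually captured exactly enough that the fixpoint test of Step~\ref{framework:safe}, $F_j \subseteq \bigcup_{m<j} F_m$, must eventually succeed unless a counterexample is first found. Concretely, if the system is safe, there is a bound on the diameter of the reachable region, and once $i$ exceeds it the over-approximate frames stabilize so that the subsumption check fires; if the system is unsafe, Theorem~\ref{thm:sat} guarantees some finite $B_i$ meets $I$, and the minimal-index choice in Step~\ref{framework:minimal_j} ensures this is discovered, driving $j$ down to $0$ in Step~\ref{framework:unsafe}.

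The main obstacle I anticipate is justifying termination of the inner loop cleanly, specifically verifying that the two asserted conditions genuinely hold and that they force strict progress of a well-founded measure. One has to rule out the degenerate possibility that the loop adds cubes to $S(B)$ and clauses to the $F$-sequence indefinitely without ever making net progress toward $S(F) \cap R^{-1}(S(B)) = \emptyset$. This requires carefully coupling the two monotone quantities (the growing $S(B)$ and the shrinking frames) and checking that the intersection test in the loop guard is driven to $\emptyset$ by their combined finiteness. I expect the other pieces — soundness of the safe and unsafe verdicts — to follow fairly directly from Theorems~\ref{thm:unsat} and \ref{thm:sat} and Lemma~\ref{lem:singlecompute}, with the real bookkeeping effort concentrated in the progress argument.
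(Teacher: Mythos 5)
Your proposal is correct and follows essentially the same route as the paper: partial correctness via Lemma~\ref{lem:singlecompute} (the sequences remain valid $F$- and $B$-sequences) combined with Theorems~\ref{thm:unsat} and~\ref{thm:sat}, and termination of the inner loop by showing the two assertions force each $F_j$ to strictly shrink and $S(B)$ (hence $R^{-1}(S(B))$) to strictly grow, both bounded by the finite state space, with the outer loop handled by the same boundedness argument. The "main obstacle" you flag---verifying that the asserted conditions actually hold---is precisely what the paper isolates as its Lemma~\ref{lem:assertion}, proved from the minimality of $j$ in Step~\ref{framework:minimal_j} and the nonemptiness of $c_1\cap F_j$ from Step~\ref{framework:some_c}.
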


We first show that the assertions in the framework are always true, and then prove the \textit{while} loop in Step \ref{framework:while} can finally terminate.

\begin{lemma}\label{lem:assertion}
	The assertions in Step \ref{framework:sat} and \ref{framework:unsat} are always true.
\end{lemma}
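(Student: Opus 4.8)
The plan is to verify the two assertions separately, each by unwinding the definition of the cube produced in that step and applying the matching item of Lemma~\ref{lem:singlecompute}. Throughout I read $c_1,c_2$ as sets of states wherever the assertions demand it, following the paper's overloading convention.

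For the unsatisfiable branch (Step~\ref{framework:unsat}), the target $\neg c_2\not\supseteq F_j$ says, as state sets, that $F_j\cap c_2\neq\emptyset$. First I would record that the cube $c_1=pa(F_j(x)\wedge T(x,x')\wedge {B_k}'(x'))|_x$ computed in Step~\ref{framework:some_c} satisfies $c_1\subseteq F_j$: projecting a satisfying partial assignment to the current variables yields only states that model $F_j$. Next, Lemma~\ref{lem:singlecompute}(\ref{lem:singlecompute:item:2}) applied to $\phi=F_{j-1}(x)\wedge T(x,x')\wedge {c_1}'(x')$ gives $c_1\Rightarrow c_2$, i.e.\ $c_1\subseteq c_2$ as state sets. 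Since $c_1$ is a genuine (nonempty) cube, these two inclusions combine to $\emptyset\neq c_1\subseteq F_j\cap c_2$, so $F_j\not\subseteq\neg c_2$, which is exactly the assertion.

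For the satisfiable branch (Step~\ref{framework:sat}), the target $c_2\not\subseteq R^{-1}(S(B))$ would follow from the minimality of $j$ chosen in Step~\ref{framework:minimal_j}. Because $j$ is the least index with $F_j\cap R^{-1}(B_k)\neq\emptyset$ for some $k$, and because $j=0$ would already have returned unsafe in Step~\ref{framework:unsafe}, the index $j-1$ is well defined and satisfies $F_{j-1}\cap R^{-1}(B_k)=\emptyset$ for every $k$; summing over $k$ via $\bigcup_k R^{-1}(B_k)=R^{-1}(S(B))$ gives $F_{j-1}\cap R^{-1}(S(B))=\emptyset$. Lemma~\ref{lem:singlecompute}(\ref{lem:singlecompute:item:1}) then places the nonempty cube $c_2=pa(\phi)|_x$ inside $F_{j-1}$, so $c_2\cap R^{-1}(S(B))=\emptyset$ and in particular $c_2\not\subseteq R^{-1}(S(B))$.

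I expect the main obstacle to be pinning down the reference point of $S(B)$ in the second assertion. Between selecting $j$ and reaching Step~\ref{framework:sat} the algorithm inserts $c_1$ into $B_{k+1}$ in Step~\ref{framework:add_c}; and since Lemma~\ref{lem:singlecompute}(\ref{lem:singlecompute:item:1}) makes every state of $c_2$ a predecessor of some state of $c_1$, we have $c_2\subseteq R^{-1}(c_1)\subseteq R^{-1}(S(B))$ against the \emph{updated} sequence, which would \emph{falsify} the assertion. The resolution, and the delicate point to state carefully, is that the assertion must be read against the $B$-sequence as it stood when $j$ was selected; with that snapshot the minimality argument applies verbatim. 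I would make this snapshot explicit and confirm that inserting $c_1$ is the only change to the $B$-sequence occurring before the assertion is evaluated. Finally, I would note how the lemma feeds Theorem~\ref{thm:terminate}: the unsatisfiable assertion guarantees each refinement strictly shrinks some $F_j$, while the satisfiable assertion guarantees each step reaches a state outside the previous $R^{-1}(S(B))$, supplying the progress measures needed for termination.
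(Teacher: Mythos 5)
Your proof is correct and follows essentially the same route as the paper's: the unsat-branch assertion from $c_1\subseteq F_j$ together with $c_1\Rightarrow c_2$, and the sat-branch assertion from the minimality of $j$ giving $F_{j-1}\cap R^{-1}(S(B))=\emptyset$ together with $c_2\subseteq F_{j-1}$. The one place you go beyond the paper is the snapshot caveat: you are right that after Step \ref{framework:add_c} inserts $c_1$ into $B_{k+1}$ one has $c_2\subseteq R^{-1}(c_1)\subseteq R^{-1}(S(B))$ for the updated sequence, so the assertion only holds against the $B$-sequence as it stood when $j$ was selected; the paper's proof silently assumes that reading, so your explicit resolution is a legitimate tightening rather than a deviation.
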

\begin{proof}
We first prove the assertion in Step \ref{framework:sat} is true. From Step \ref{framework:minimal_j} we know that $F_{j-1}\cap R^{-1}(S(B))=\emptyset$, so $(c_2\subseteq F_{j-1})\cap R^{-1}(S(B))=\emptyset$ is also true. Thus the assertion $c_2\not\subseteq R^{-1}(S(B))$ is true. For the assertion in Step \ref{framework:unsat}, first from Step \ref{framework:some_c} we know that $c_1\cap F_j\not = \emptyset$, so $(c_2\supseteq c_1)\cap F_j\not = \emptyset$ is also true. Thus the assertion $\neg c_2\not\supseteq F_j$ is true.
\end{proof}

Informally speaking, Lemma \ref{lem:assertion} guarantees that adding $c_2$ to $B_{k+1}$ increases strictly the states in $R^{-1}(S(B))$ (recall that each $B_i$ is in DNF), while adding $\neg c_2$ to $F_j$ decreases strictly the states in $F_j$ ($F_j$ is in CNF). The assertions help to prove the termination of Step \ref{framework:while}. 

\begin{lemma}\label{lem:assertion_terminate}
	Step \ref{framework:while} in the framework will finally terminate.
\end{lemma}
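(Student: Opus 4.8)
The plan is to exhibit a single integer-valued measure that strictly increases on every iteration of the \textit{while} loop in Step \ref{framework:while} and is bounded above, so that only finitely many iterations can occur. The finiteness driving everything is that the state space $2^V$ is finite; write $N = |2^V| = 2^{|V|}$. Throughout one execution of the loop the index $i$ is fixed, the $F$-sequence has the fixed length $i+1$, and frames are never reset: within the loop each $F_j$ can only shrink (via blocking clauses added in Step \ref{framework:unsat}), while the $B$-sequence can only grow, since the cubes added in Steps \ref{framework:add_c} and \ref{framework:sat} are never removed. (The only frame reset, $F_i := P$ in Step \ref{framework:extend_Fi}, happens outside the loop.)

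First I would define
$$\mu \;=\; \bigl|R^{-1}(S(B))\bigr| \;+\; \sum_{j=1}^{i}\bigl(N - |F_j|\bigr),$$
and note $0 \leq \mu \leq N(i+1)$, since $R^{-1}(S(B)) \subseteq 2^V$ and $0 \leq |F_j| \leq N$. Both summands are monotone nondecreasing over the run: $S(B)$ only grows, so $R^{-1}(S(B))$ only grows; and each $F_j$ only shrinks, so each $N-|F_j|$ only grows. Hence it suffices to show that every iteration which does not return unsafe strictly increases at least one summand. I would split on the outcome of $\phi$: after Step \ref{framework:unsafe} fails (so $j \geq 1$), the iteration inserts $c_1$ into the $B$-sequence and then either (b) $\phi$ is satisfiable and $c_2$ is added to $B$, or (c) $\phi$ is unsatisfiable and $\neg c_2$ is added to $F_j$.

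In case (c) the assertion $\neg c_2 \not\supseteq F_j$ of Lemma \ref{lem:assertion} says that $\neg c_2$ kills at least one state of $F_j$, so $|F_j|$ strictly drops and the second summand strictly increases. The main work is case (b): adding $c_1$ by itself need not enlarge $R^{-1}(S(B))$, because $c_1 \subseteq R^{-1}(B_k) \subseteq R^{-1}(S(B))$ already holds, so the strict growth must come from $c_2$. Here Lemma \ref{lem:assertion} supplies $c_2 \cap R^{-1}(S(B)) = \emptyset$ relative to the value of $S(B)$ at Step \ref{framework:minimal_j} (by minimality of $j$ we have $F_{j-1}\cap R^{-1}(S(B))=\emptyset$, and $c_2 \subseteq F_{j-1}$). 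Conversely $c_2$ is a set of predecessors of $c_1$, i.e. $c_2 \subseteq R^{-1}(c_1)$ (as in Lemma \ref{lem:singlecompute}), and once $c_1$ has been inserted into the $B$-sequence in Step \ref{framework:add_c} this yields $c_2 \subseteq R^{-1}(c_1) \subseteq R^{-1}(S(B))$ for the updated $S(B)$. Thus the witness predecessor state of $c_1$ returned by the SAT call belongs to the new $R^{-1}(S(B))$ but not the old one, and the first summand of $\mu$ strictly increases.

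Combining the cases, $\mu$ strictly increases on every non-terminating iteration while remaining in $[0, N(i+1)]$, so the loop performs at most $N(i+1)$ iterations and terminates. The step I expect to be the main obstacle — and the only genuinely subtle point — is case (b): one has to track $S(B)$ carefully \emph{across} Step \ref{framework:add_c} to see that the growth of $R^{-1}(S(B))$ is strict, exploiting $c_2 \subseteq R^{-1}(c_1)$ only after $c_1$ has been added, together with the disjointness guaranteed by Lemma \ref{lem:assertion}. All remaining bookkeeping (monotonicity of the two summands and the uniform bound $N(i+1)$) is routine once the finiteness of $2^V$ is in hand.
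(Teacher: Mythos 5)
Your proof is correct and takes essentially the same route as the paper's: the paper likewise argues that the assertions of its Lemma~\ref{lem:assertion} force $R^{-1}(S(B))$ to grow strictly (satisfiable case) or some $F_j$ to shrink strictly (unsatisfiable case), and concludes termination from the boundedness of these sets. Your only addition is to package the two monotone quantities into the single explicit ranking function $\mu$ with the bound $N(i+1)$, and to spell out the timing point that $c_2\subseteq R^{-1}(c_1)$ only enlarges $R^{-1}(S(B))$ \emph{after} $c_1$ has been inserted in Step~\ref{framework:add_c} --- a detail the paper leaves implicit.
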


\begin{proof}
	Assume that $S(F)\cap R^{-1}(S(B))=\emptyset$ never holds, and $j>0$ in the framework is always true. So there is always a formula $\phi=F_{j-1}(x)\wedge T(x,x')\wedge c_1(x')$ such that the $F$- or $B$-sequence can be updated (see Step \ref{framework:sat} and \ref{framework:unsat}). Moreover, Lemma \ref{lem:assertion} guarantees that the size of each $F_i$ decreases strictly while the size of $R^{-1}(S(B))$ increases strictly. However, since the sizes of $F_i$s and $R^{-1}(S(B))$ are bounded, they cannot be updated forever. As a result, either $j=0$ must be finally true, or $S(F)\cap R^{-1}(S(B))=\emptyset$ finally holds.
\end{proof}

Now we are ready to prove Theorem \ref{thm:terminate}.

\begin{proof}
	First of all, if the framework returns, Lemma \ref{lem:singlecompute} guarantees the $F$- and $B$- sequences are preserved under the framework. Hence, the correctness is guaranteed by Theorem \ref{thm:sat} and Theorem \ref{thm:unsat}. 
	
	Now we prove the framework will finally return. First Theorem \ref{thm:terminate} guarantees the \textit{while} loop in the framework can always terminate, with unsafe or $S(F)\cap R^{-1}(S(B))=\emptyset$ is true. As a result, the loop body on each $i$ can finally terminate. Now we prove the loop on $i$ can also terminate. If $Sys$ is unsafe for $P$ with a counterexample of length of greater than one, the framework will return finally according to Item \ref{framework:unsafe}. It is because the size of $S(B)$ is bounded, so the size of $R^{-1}(S(B))$ is also bounded. Moreover, Lemma \ref{lem:assertion} guarantees the size of $R^{-1}(S(B))$ keeps growing in each $i$, thus $R^{-1}(S(B))$ will finally contain all reachable states to $\neg P$ for some $i\geq 0$, which will include an initial state in $I=F_0$.  On the other hand, if $Sys$ is safe for $P$, since $S(F)$ is also bounded, and Lemma \ref{lem:assertion} guarantees the size of each $F_j$ keeps decreasing, so $S(F)$ will finally contain only the set of reachable states from $I$ for some $i$. At that time, our framework will return according to Item \ref{framework:safe} based on Theorem \ref{thm:unsat}.
	
\end{proof}


\subsection{Related Work}\label{sec:framework:comp}

There are two main differences between \ar and IC3/PDR.
First, IC3/PDR requires the $F$-sequence to be monotone, while \ar does not.
Because \ar keeps the $F$-sequence non-monotone, it does not require the \textit{push} 
and \textit{propagate} processes, which are necessary in IC3/PDR. A drawback for \ar is that 
additional SAT calls are needed to check safety, i.e. to find $i>0$ such that 
$F_{i+1}\subseteq \bigcup_{0\leq j\leq i}F_j$ holds. In IC3/PDR, since the $F$-sequence is monotone, 
it is easy to find such $i$ that $F_i = F_{i+1}$ syntactically. 

Another main difference between \ar and IC3/PDR is the way they refine the $F$-sequence. \ar utilizes the off-the-shelf 
MUC techniques, while IC3/PDR puts more efforts to compute MIC such that the refined $F$-sequence is still monotone. 
Moreover, MIC are relatively inductive, while clauses from MUC cannot guarantee. As a result, \ar and IC3/PDR refines the 
$F$-sequence by different kinds of clauses, and thus perform differently. Although computing relatively-inductive clauses 
is proved to be efficient in IC3/PDR, we show in the experiments that, \ar complements IC3/PDR on the instances that 
computing relatively inductive clauses is not conducive for efficient checking.

It is trivial to apply the framework of \ar in both forward and backward directions by simply
reversing the direction of the model.  Indeed, our implementation of \ar runs the forward and
backward modes in parallel.
Although in theory it is also possible to run IC3/PDR in backward mode, there is a technical issue 
that must be addressed.  In most IC3/PDR implementations, the initial states $I$ is considered 
as a single cube.  This helps to save a lot of SAT calls in the process of \textit{generalization}, 
in which the computed clause $c$ must satisfy $I\wedge c$ is unsatisfiable. 
(When $I$ is a cube it is reduced to checking the containment of $\neg c\subseteq I$.) 
But usually the set of unsafe states cannot be expressed by a single cube, which makes it more complex
to run IC3/PDR in a backward mode. Indeed, the evaluation of IC3/PDR in the backward mode is still an open topic.

\ar also maintains an under-approximate state sequence (B-sequence) to check unsafety, 
while IC3/PDR checks unsafety ``on-the-fly''. Other papers also introduced multiple 
state sequences.  The approach of ``Dual Approximated Reachability'' maintains two over-approximate 
state sequences to check safety in both forward and backward directions \cite{VGS13}.
In contrast, we maintain two complementary (over- and under-) approximate state sequences to check 
safety and unsafety at the same time.  In \cite{GI15}, states reachable from initial states are maintained 
to help to handle proof-obligation generation. In contrast, the $B$-sequence keeps states that reach bad 
states. In PD-KIND\cite{JD16}, the idea of keeping both over- and under- approximate (F- and B-) state sequences 
is also introduced, and the B-sequence is used to refine the F-sequence as well. However, CAR utilizes the F-sequence 
for safety checking and B-squence for unsafey checking, while PD-KIND utilizes the F-sequence for unsafety checking and 
another ``induction frame'' has to be introduced for the safety checking. Moreover, \ar and PD-KIND use very different
underlying techniques: \ar uses MUC and partial assignment, while PD-KIND uses interpolation and generalization. 

\section{Experiments}\label{sec:exp}

\textbf{Experimental Setup }
In this section, we report the results of the empirical evaluation.
Our (C++) model checker \emph{\tool}\footnote{ \url{https://github.com/lijwen2748/CARchecker}} runs \ar in both Forward and Backward modes,
using Minisat \cite{ES03} and Muser2 \cite{MSL11} as the SAT and MUC engines. 
The tool implements the algorithm from \cite{YSTM14} to extract partial assignments.
The performance of \textit{\tool} is tested by evaluating it on the 548 safety benchmarks 
from the 2015 Hardware-Model-Checking Competition \cite{BK15}. 

We first compared the performance of \ar with that of IC3/PDR, 
as implemented in the state-of-the-art model checker ABC \cite{BM10} (the ``pdr'' command in ABC with default parameters). 
It should be noted that, there are many variants of IC3/PDR implementations currently, in which many heuristics are 
applied to the original one \cite{GR16}. We choose ABC as the reference implementation for comparison 
because it is a standard model checker integrating several kinds of SAT-based model checking techniques. 
Moreover, a portfolio of modern model checkers consists of IC3/PDR, BMC (Bounded Model Checking), 
and IMC (Interpolation Model Checking), so we also run the experiments of BMC and IMC in ABC to explore 
the contribution of \ar compared to an existing portfolio 
(we used the ``bmc2'' and ``int'' commands in ABC with default parameters).  

We run the experiments on a compute cluster that consists of 2304 
processor cores in 192 nodes (12 processor cores per node), running at 2.83 GHz with 48GB of RAM per node. 
The operating system on the cluster is RedHat 6.0.  When we run the experiments, each tool is run on a dedicated 
node, which guarantees that no CPU or memory conflict with other jobs will occur. 
Our tool \emph{\tool} can run \ar in Forward mode, Backward mode or combined mode, 
which returns the best result from either of the approaches. 
In our experiments, memory limit was set to  8 GB and time limit (CPU time) to 1 hour. 
Instances that cannot be solved within this time limit are considered unsolved, 
and the corresponding time cost is set to be 1 hour.  We compare the model-checking results 
from \tool with those from ABC on all benchmarks, and no discrepancy is found.

\setlength{\abovecaptionskip}{5pt}
\begin{figure}[t]
\centering
\includegraphics[scale=0.35]{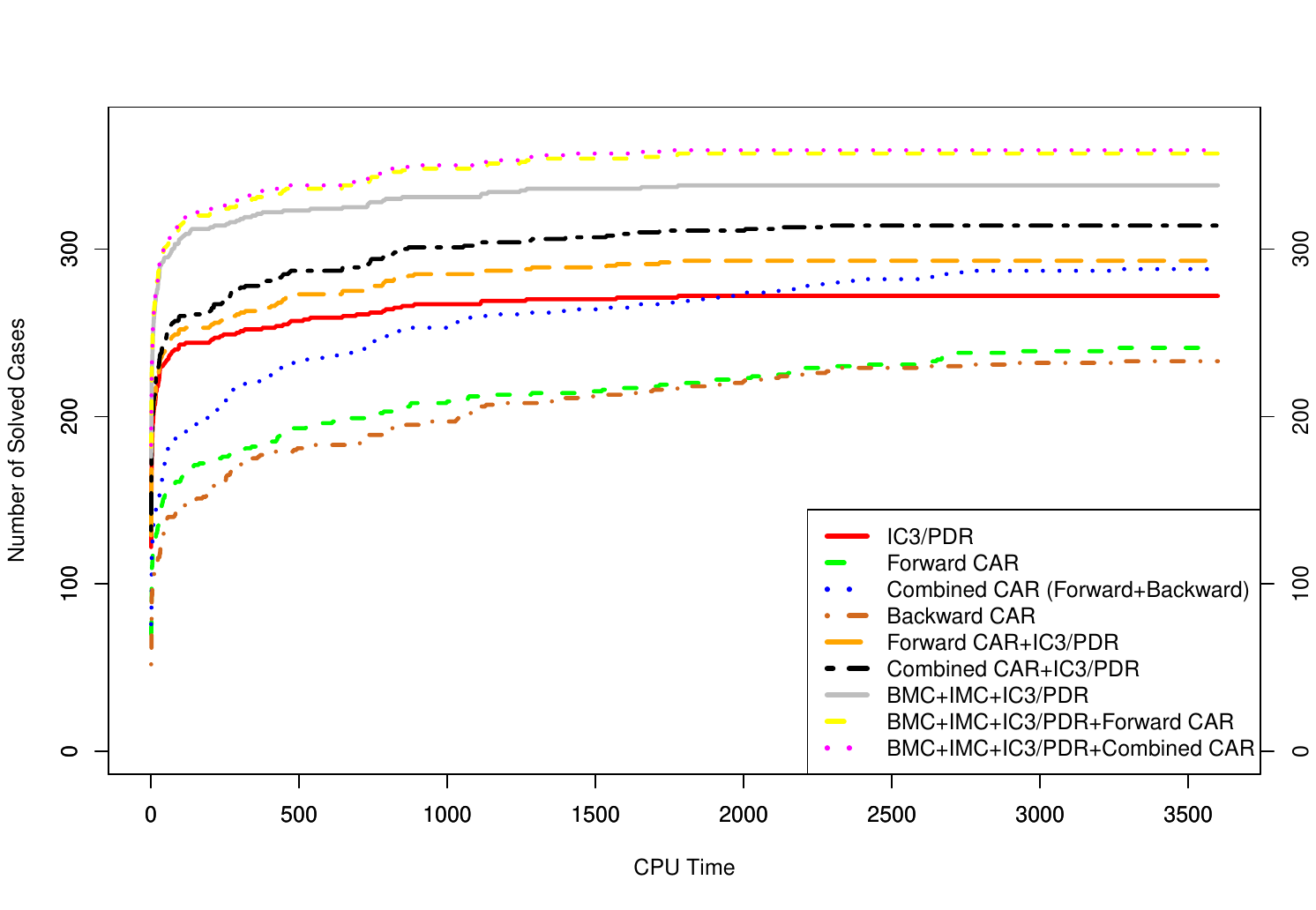}
\caption{Overall performance among different approaches.}
\label{fig:overall}
\end{figure}

\textbf{Experimental Results }
We show first overall performance comparison among different approaches in Fig. \ref{fig:overall}.
Neither Forward \ar nor Backward \ar by itself is currently competitive with IC3/PDR. The reasons are 
as follows.  First, the implementation of \tool does not utilize the power of incremental SAT computing, 
since the clauses to be added to the SAT solver are from the output of MUC solvers; 
We do not know of a way to combine them incrementally. In contrast, incremental SAT calling is 
an important feature in IC3/PDR. Secondly, ABC is a mature tool, incorporating many heuristics,
while  \emph{\tool} has only been in development for a few months so it is not be surprising that 
ABC performs better. We believe that the performance of \emph{\tool} can be improved
in the future.

\begin{figure}[t]
\centering
\includegraphics[scale=0.35]{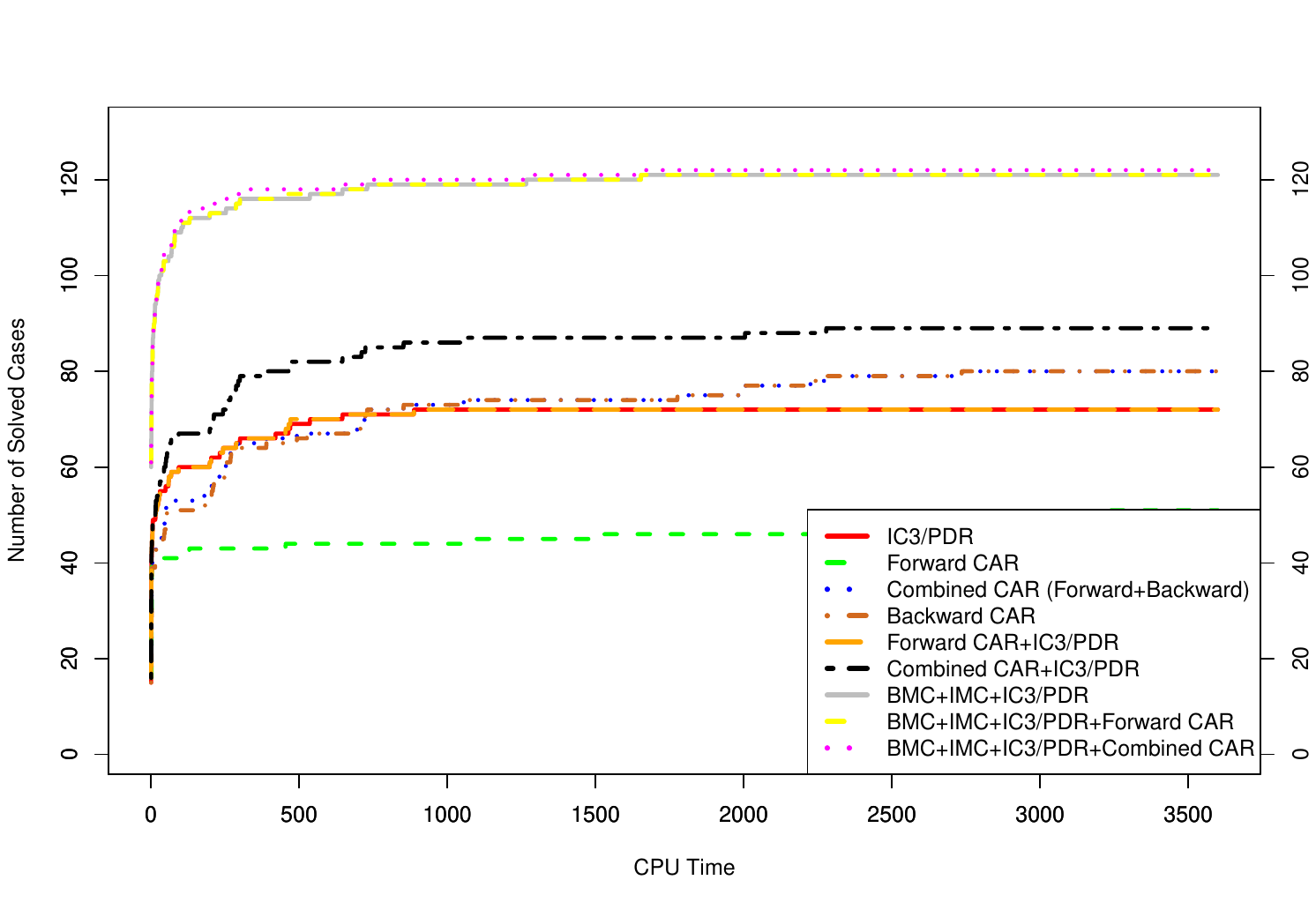}
\captionof{figure}{Overall performance on checking unsafe models.}
\label{fig:sat}
\end{figure}

Nevertheless, \ar is able to compete with IC3/PDR when combining the Forward and Backward modes. 
In Fig. \ref{fig:overall}, the plotted line for ``Combined \ar'' is obtained from the best results which selected from 
either Forward or Backward \ar: Combined \ar solves a total number of 288 instances, 
while IC3/PDR solves a total number of 271 instances. Moreover, 42 instances are solved only by Combined \ar.
We view the advantage of running \ar in both directions
as one of the contributions of this paper; it remains to be
seen whether this would also be an advantage for IC3/PDR.

\begin{figure}[t]
\centering
\includegraphics[scale=0.35]{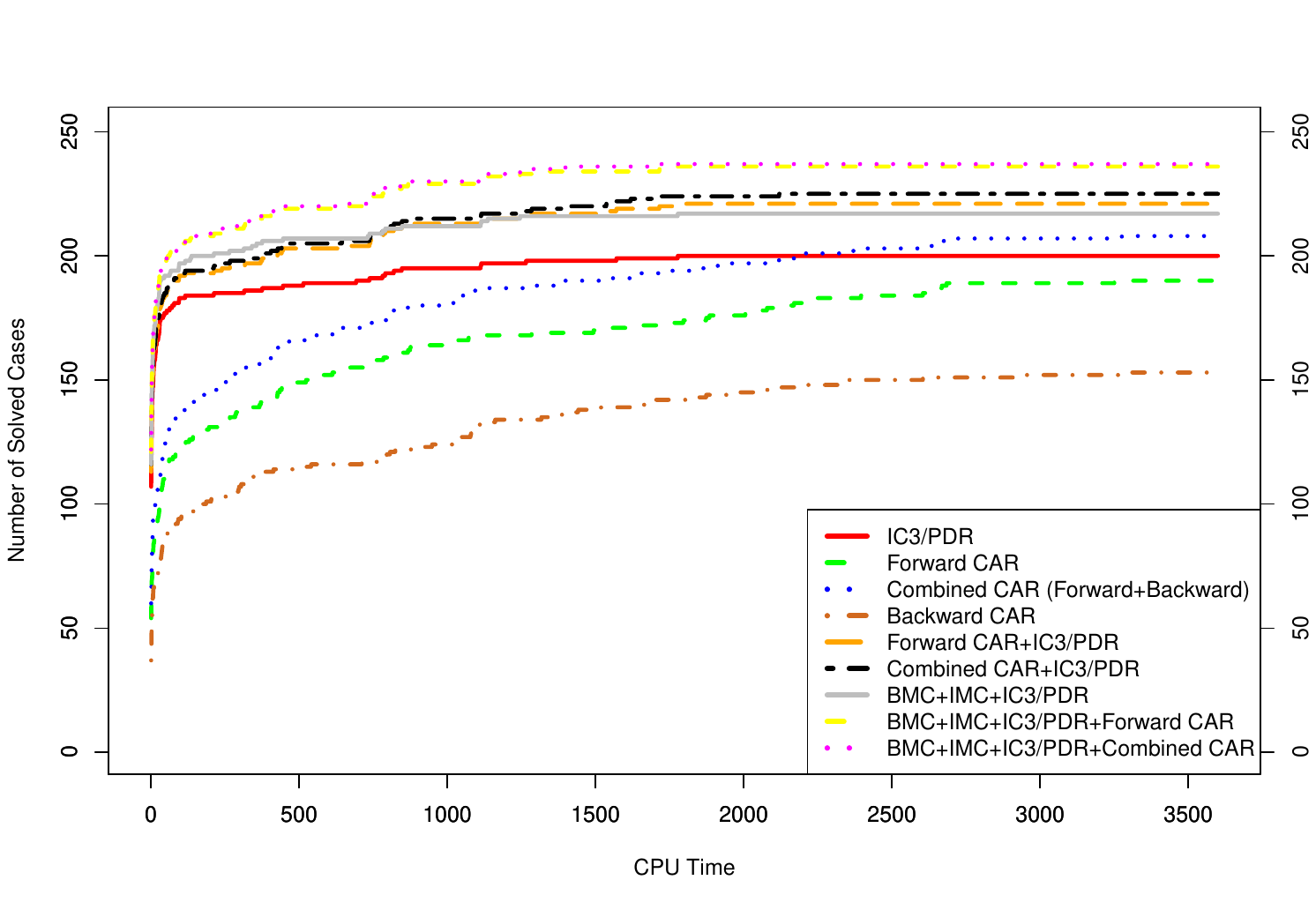}
\captionof{figure}{Overall performance on checking safe models.}
\label{fig:unsat}
\end{figure}

A major point we wish to demonstrate in this section, is that Forward \ar complements IC3/PDR. 
In Fig. \ref{fig:overall}, the plotted line for ``Forward \ar + IC3/PDR'' shows the best results from Forward \ar and 
IC3/PDR. This combination outperforms IC3/PDR by solving 21 more instances. These 21 instances are solved
by Forward \ar on checking safe models, see the results in Fig~\ref{fig:unsat}. In contrast,
in Fig. \ref{fig:sat} we see that the portfolio of Forward \ar combined with IC3/PDR does not win more instances 
than IC3/PDR on checking unsafe models (the two lines are plotted together), 
while in Fig. \ref{fig:unsat} the improvement is obvious. 

As a concrete example,
consider a real instance ``6s24.aig'' in the benchmark, which is solved quickly by Forward \ar but times out by 
IC3/PDR. Forward \ar adds the clause $\{2956\}$ (2956 is an input id in the model) into $F_{\infty}$ in Frame 1 
because the cube $\{\neg 2956\}$ is detected to represent a set of dead states. 
Moreover, it also detects the clause $\{\neg 2956\}$ must be added to 
$F_1$, as the states represented by the cube $\{2956\}$ are one step unreachable from $I$. 
As a result, Forward \ar quickly decides in Frame 1 that this model is safe, 
because both clauses $\{2956\}$ and $\{\neg 2956\}$ cannot meet in $F_1$ (recall that $F_{\infty}\subseteq F_1$). 
For IC3/PDR, although it can detect that the clause $\{2956\}$ should 
be added to $F_{\infty}$, it cannot add the clause $\{\neg 2956\}$ into $F_1$, because it is not a relatively inductive 
clause, i.e. $F_1(x)\wedge T(x,x')\wedge (\neg 2956) (x) \Rightarrow (\neg 2956)' (x')$ is not true.
As a result, computing relatively-inductive clauses is less conductive than 
computing non-relatively-inductive clauses to check this 
benchmark. So Forward \ar is able to complement IC3/PDR in such similar instances.

Furthermore, if we consider important parameters that influence the performance, 
e.g., number of clauses and number of frames, we get more positive results. 
Note that comparing the number of SAT calls between Forward \ar and PDR is not too informative, 
since Forward \ar also contains MUC calls. So fewer SAT calls in Forward \ar does not mean lower cost. 
Fig. \ref{fig:clause} and Fig. \ref{fig:frame} shows respectively the scatter plots between Forward \ar and 
IC3/PDR on number of clauses and number of frames. From the figures, Forward \ar does not generate more clauses or more 
frames than IC3/PDR. In detail, 172 (175) instances are solved with fewer clauses (frames) by Forward \ar than IC3/PDR, 
comparing with that 121 (118) instances are solved with fewer clauses (frames) by IC3/PDR than Forward \ar. 
Generally speaking, the number of clauses and frames are positively correlated to the overall performance, 
which indicates that Forward \ar should be competitive with IC3/PDR, once \tool is as optimized as ABC.

Backward \ar complements both Forward \ar and IC3/ PDR on unsafe models. 
As shown in Fig. \ref{fig:sat}, Backward \ar solves more unsafe cases (80) than IC3/PDR (72) and 
Forward \ar (51). Moreover, the combination of the three approaches solves 17 more unsafe cases 
than Forward \ar and IC3/PDR. It is surprising to see that the performance of Forward \ar is 
much worse than that of Backward \ar (51 vs 80), and all solved cases by Forward \ar are also solved by Backward \ar. 
Forward \ar searches from bad states ($\neg P$) and the goal states are in $I$, while Backward \ar searches
from $I$ and the goal states are in $\neg P$. A conjecture is that the state space of $\neg P$ is much larger than that of 
$I$, thus causing the overapproximate search to the states in $\neg P$ to be easier. 
Although Backward \ar solves more unsafe cases than IC3/PDR, there are several cases that can be solved by 
IC3/PDR but cannot be solved by Backward \ar.  We leave further comparison between IC3/PDR and Backward \ar to 
future work.

Finally, we explore the contribution of \ar to the current SAT-based model-checking portfolio, 
which includes BMC, IMC and IC3/PDR.  Fig. \ref{fig:overall}
shows the plots on the combinations IC3/PDR+BMC+IMC, IC3/PDR+BMC+IMC+Forward \ar, and IC3/PDR+BMC+IMC+Combined \ar.
Forward \ar adds 19 solved instances (all safe models) to the combination of IC3/PDR+BMC+IMC, 
and Backward \ar solves another two (1 safe and 1 unsafe model). Although BMC solves the most unsafe cases (116), 
there are three unsafe instances solved only by IC3/PDR and one unsafe instance solved only by Backward \ar. 
For safe models, the number of solved instances only by IC3/PDR, IMC, Forward \ar and Backward \ar are 13, 12, 19, 1, 
respectively.

In summary, we conclude from our experimental results that 1) Forward \ar complements IC3/PDR on checking safe models; 
2) Backward \ar complements IC3/PDR on checking unsafe models; 
3) Running \ar in both directions
improves the performance, and 4) \ar contributes to the current portfolio of model checking strategies. 
We expect these conclusions to be strengthened as the development of \emph{\tool} matures.


\begin{figure}
\centering
\includegraphics[scale=0.7]{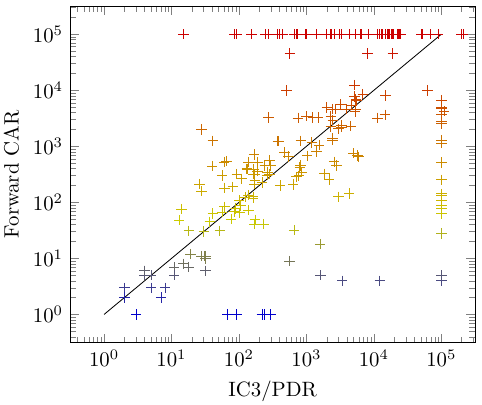}
\captionof{figure}{Comparison on clauses.}
\label{fig:clause}
\end{figure}

\begin{figure}
\centering
\includegraphics[scale=0.7]{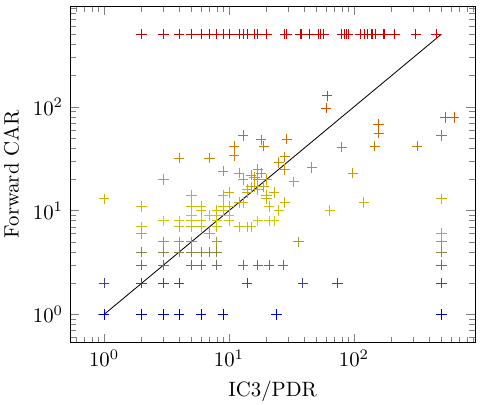}
\captionof{figure}{Comparison on frames.}
\label{fig:frame}
\end{figure}

\section{Concluding Remarks}\label{sec:con}
\ar is inspired by IC3/PDR, but it differs from it in some crucial aspects.
A main difference between \ar and IC3/PDR is that \ar does not require the $F$-sequence to be monotone. 
Also \ar uses a different strategy (MUC) to refine the $F$-sequence than IC3/PDR. 
Furthermore, \ar combines its over-approximate and under-approximate searches in both
forward and backward modes. Due to these differences, \ar and IC3/PDR have different performance
profiles.  Our experiments show that IC3/PDR and \ar complement each other. The fact that our new tool, 
after a few months of development, outperforms mature tools that have been under development
for many years over a non-negligible fraction of the benchmark suite, speaks to the merit of the
new approach.

Furthermore, the area of SAT-based model checking is still a very active research topic.
Many improvements to IC3/PDR have been proposed since the first published paper \cite{Bra11}. 
For example, a recent development is the combination of IC3/PDR with IMC in the Avy tool \cite{VG14}.
We believe that beyond the \ar tool, the \ar framework is an important contribution to this research
area and will stimulate further research. For example, it may be easier to combine IMC with \ar than
to combine IMC with IC3/PDR, as currently Avy has to pay extra effort to convert the generated interpolation 
invariants to be monotone, meeting the requirement to the state sequence maintained by IC3/PDR,
while \ar does away with this monotonicity requirement.

To conclude, we presented here \emph{Complementary Approximate Reachability} (\ar), 
a new framework for SAT-based safety model checking. \ar checks at the same time for both safety and 
unsafety in a more general way than IC3/PDR, and uses a different technique to refine the over-approximate 
state sequence.  Experiments show that \ar complements IC3/PDR and contributes to the current portfolio, 
which consists of IC3/PDR, BMC and IMC.  We argue therefore \ar is a promising approach for 
safety model checking. 

\section{Acknowledgement}
The authors thank anonymous reviewers for the helpful comments, and also thank Yakir Vizel and Alexander Ivrii for fruitful discussions. Jianwen Li is partially supported by China HGJ Project~(No. 2017ZX0 1038102-002), and NSFC Project No. 61532019. Geguang Pu is supported by NSFC Project No. 61572197. Moshe Y. Vardi is supported in part by NSF grants CCF-1319459 and IIS-1527668, and by NSF Expeditions in Computing project "ExCAPE: Expeditions in
Computer Augmented Program Engineering".

\bibliographystyle{IEEEtran}
\bibliography{ok,cav} 
\end{document}